\documentclass{article} 
\usepackage[margin=1in]{geometry}
\usepackage[nolists,nomarkers,tablesfirst]{endfloat} 
\usepackage{amsmath} 

\usepackage{amsmath,amssymb,amsfonts}
\usepackage{authblk}

 \title{The restrictive conditions to solve LTI Systems by Ordinary Differential Equations \\ 
 \Large And how State Space Representations help to expand them }

\author[1]{Alexandre Sanfelici Bazanella$^1$ \and Trist\~ao Garcia}

\affil[1]{Department of Automation and Energy \\ Universidade Federal do Rio Grande do Sul \\ Porto Alegre - RS - Brazil}

\date{\empty}
\usepackage{xcolor}
\newcommand{\bydef}{\stackrel{\Delta}{=}}
\newcommand{\beq}{\begin{equation}}
\newcommand{\eeq}{\end{equation}}

\providecommand{\comjh[1]}{\red{\com{JH}{#1}}}
\providecommand{\com[2]}{\begin{tt}[#1: #2]\end{tt}}
\definecolor{red}{rgb}{1,0,0}

\definecolor{blu}{rgb}{0,0,1}

\definecolor{gre}{rgb}{0,0.7,0.3}

\newcommand{\vecx}{\mathbf{x}}
\newcommand{\cqfd}{\hfill \rule{2mm}{2mm}\medbreak\indent}

\newtheorem{definition}{Definition}
\newtheorem{theorem}{Theorem}
\newtheorem{problem}{Problem}
\newtheorem{example}{Example}

\begin{document}

\maketitle

	\pagestyle{empty}
	\thispagestyle{empty}

%
%

\section{Introduction}\label{sec:intro}

Ordinary linear differential equations 
(aka linear ODE's) 
occupy a special place in the theory of physics and engineering.
Their convenience and importance in the formulation and in the solution of
a wide variety of problems in circuits, mechanics, control, and all sorts
of applications can hardly be overstated. 
The study of ODEs is the entrance door to systems theory, which 
occupies a central role in Electrical Engineering in general and in Control Systems in particular. 
Accordingly, they receive special attention in the curricula of Electrical Engineering
and in Control Engineering. 

An ODE can be seen as an operator on the space of functions, and it has a nontrivial kernel,
so any ODE admits an infinite number of solutions -  specifically, a subspace of dimensions $n$,
where $n$ is the order of the ODE. One can specify the
value of the solution and of its derivatives of order up to $n-1$ at a given point in time - say $t=0$ - to
obtain a unique solution. These are the 
initial value problems (IVPs), which are the ones of utmost  interest in our field. 
Indeed, many course hours are usually allocated to their study in EE programs around the globe. 
In our University, in the engineering programs in Electrical Engineering (100 new students per year) and
Control and Automation Engineering (33),  there is a one semester course dedicated exclusively 
to the study of differential equations, plus a one semester course on Signals and Systems 
and two semesters on Circuit Theory, where ODEs occupy a significant part of the semester. 

There is ample choice of literature, with several celebrated textbooks, for instance, 
\cite{Kaplan_ingles, Boyce_ingles, Haykin_ingles, Lathi_ingles, Desoer_ingles, IMPA, Geromel, Orsini}
just to name a few among the ones that the authors 
have seen most often in their (quite long) teaching experience. This literature covers an ample variety of topics and 
applications, and dedicates a lot of space to the study of IVPs, but it is laconic at best about the initial conditions
themselves. To illustrate what we mean, take the simple example of a second-order ODE:
\begin{equation}\label{example_obs}
\ddot{y}(t) + 5\dot{y}(t) + 6 y(t) = \dot{u}(t) + u(t)
\end{equation}
for which we want to determine the step response with some nonzero initial condition. In other words, we are given 
$u(t) =0~ \forall t<0$, $u(t) = 1~\forall t>0$ and  initial values of $y(t)$ and its derivative. 

The most ``popular" way to find the solution for such an IVP is to use the Laplace Transform.
Applying the Laplace Transform to the ODE (\ref{example_obs}) transforms it into an algebraic
equation in which $Y(s)$ (the Laplace Transform of $y(t)$) can be isolated, yielding:
\begin{equation}\label{solucao_ingenua}
Y(s) = \frac{s+1}{s^2 +5s+6 } U(s) +  \frac{\dot{y}(0) + (s+5) y(0) + u(0)}{s^2 +5s+6}
\end{equation}

Once we have $Y(s)$ we can obtain the solution $y(t) $ by calculating the inverse Transform of $Y(s)$, which 
in almost all cases consists just in performing a partial fraction expansion of $Y(s)$ and looking at a table. 
This is really quite standard and (arguably) very well known. But actually $u(t)$ is discontinuous at $t=0$, that is, 
$u(0)$ does not really exist. So, what should we do? Should we use $u(0) = 0$ in this formula, 
or perhaps  $u(0) = 1$? What about $y(0)$ and $\dot{y}(0)$ that also 
appear there: are these quantities well defined or is $y(t)$ also discontinuous at $t=0$, like the input $u(t)$?
If they are discontinuous (spoiler: $\dot{y}(t)$ is), what value should we use as initial condition in the solution?

These are very basic questions, and very fundamental ones, and yet they are not treated uniformly in the literature.
In the calculus books, the right-hand side (RHS) of the ODE is just assumed (mostly tacitly) to be continuous, which also implies
that the solution $y(t)$ is continuous, thus ignoring the cases where discontinuities arise.
In most of the literature devoted to signals and systems and to circuit theory this class of problems,
which are predominant in engineering applications, is acknowledged. Still, there are some textbooks in which this
point is not mentioned at all, leaving the student at their own devices to decide what to do.
To cover these cases, some texts will present an analysis that will result in using $u(0) =0$
(as in \cite{Lathi_ingles}, among others) whereas some others would have us use $u(0)=1$ instead
(as in \cite{Haykin_ingles}, for example), not always with a convincing justification for
this choice. This lack of uniformity in the literature brings some discomfort and, more importantly, the assumptions
under which each choice is valid and/or convenient are seldom discussed (if at all).

In this note we revisit the solution of initial value problems with the aim to provide a solution 
that is clearer, with firm theoretical foundations and boundaries. 
In our path to deriving this theory we'll confront issues of existence of solutions and 
the relationship of ODEs with state space representations (SSRs). Such relationships between the 
different representations is quite relevant, and yet little unexplored in the literature. 
The fact that one is looking for a solution in the realm  of real numbers for an equation whose RHS does not belong to it (since $\dot{u}(0)$ is not finite) may  also be a source of anxiety that we prefer to remove whenever possible, and the SSR allows to do just that.

\section{The limits of traditional theory - the continuous case}

Let us start our technical treatment with a formal definition of the problem itself. 
A general linear ODE is defined as
\begin{equation}\label{EDOLcc}
\stackrel{(n)}{y} (t) + \sum_{i=1}^{n} a_i \stackrel{(n-i)}{y} (t)= \sum_{j=0}^{n} b_j \stackrel{(n-j)}{u} (t)
\end{equation}
where $a_i, ~b_j\in\mathbb{R}$ are constant coefficients, the independent variable is the input signal $u(t )$
and $y(t)$ is the output to be determined by solving the equation. If the first $r$ coefficients $b_j$ are zero and $b_{r}\neq 0$, 
one says that $r$ is the {\em relative degree} of the ODE and define $m=n-r$.

Engineering applications involving such ODEs, in particular in electrical engineering,
are mostly about solving  initial value problems (IVPs).
An IVP consists of finding the unique solution of (\ref{EDOLcc}) for a given set of {\em initial conditions} 
$y(0)$, $\dot{y}(0)$, $\ldots $, $\stackrel{(n-1)}{y}(0)$. For convenience of notation we define the following
vectors: 
$$
{\cal Y}(t) = \left[\begin{array}{c}
\stackrel{(n-1)}{y}(t) \\ \stackrel{(n-2)}{y}(t) \\ \vdots \\ y(t)  \end{array}\right] 
\quad
{\cal U}(t) = \left[\begin{array}{c}
 \stackrel{(n-1)}{u}(t)   \\  \stackrel{(n-2)}{u}(t)   \\ \vdots \\  u(t) \end{array}\right] .
$$ 
The IVP can then be spelled as: given an ODE in the form (\ref{EDOLcc}), a
signal $u(t)~\forall t > 0$ and a value ${\cal Y}(0)$, find the solution
$y(t)~\forall t>0$. 

Whatever we'll discuss in this paper concerns the solutions of IVPs themselves, regardless of the method 
used for their solution. Still, it is easier to communicate using a closed expression fo the solution, so we'll adopt it. 
The prescribed solution for IVPs is the application of the Laplace Transform to the ODE. 
The Laplace Transform is defined as
\beq\label{Laplace}
{\cal L}\{ x(t) \} = \int_0^{\infty} x(t) e^{-st} dt
\eeq
and it possesses the following property
\beq\label{propriedade}
{\cal L}\{ \dot{x}(t) \} = s{\cal L}\{ x(t) \} -x(0) .
\eeq
It is the property (\ref{propriedade}) that justifies this prescription, for it allows to turn 
the differential equation (\ref{EDOLcc}) into an algebraic equation. Indeed, applying the
Laplace Transform to each term of (\ref{EDOLcc}) and isolating the Transform of the 
unknown $y(t)$ gives the following closed-form expression for the solution, which can be found
in textbooks like \cite{Haykin_ingles}.

\begin{eqnarray}
Y(s) & = &\frac{1}{s^n +\sum_{i=1}^{n} a_i s^{n-i} } \sum_{j=0}^{n} b_j s^{n-j} U(s) - \nonumber \\
&& \frac{1}{s^n +\sum_{i=1}^{n} a_i s^{n-i} } \sum_{j=0}^{n-1} b_j \sum_{k=1}^{n-j} s^{n-k-j} \stackrel{(k-1)}{u} (0) + \nonumber \\
&& \frac{1}{s^n +\sum_{i=1}^{n} a_i s^{n-i} } \sum_{i=0}^{n-1} a_i \sum_{k=1}^{n-i} s^{n-k-i} \stackrel{(k-1)}{y} (0) \nonumber \\
&& \label{ingenua}
\end{eqnarray}
where $Y(s)\bydef {\cal L}\{y(t)\}$ and $a_0=1$. 

Alas, as illustrated by the simple example above, this problem formulation and this solution are
not precise enough to be applied in all cases of interest, for in many of those
the quantities in this formula are not defined - namely, when ${\cal U} (t)$ and/or ${\cal Y}(t)$ are 
discontinuous at $t=0$.
The use of Laplace Transforms for the solution of differential and integral equations is 
as classic a subject as can be. 
Pierre-Simon Laplace himself and other mathematicians of his time - around the year 1800, that is - already
studied such integral transformation with the aim to solve differential equations, 
and the contemporary formulation presented above dates back to almost a century ago \cite{Doetsch}. 
Its application to the analysis of electrical circuits,
in which discontinuities appear very often, dates back to at least as
early as the 1930's and plenty of work has been done in the following
decades. In \cite{Ghizzetti1938}  the author dedicates his long text to the study of ``the difficulties
in defining the initial conditions'' in electrical circuits and proposes a consistent way to do that.
This issue is not restricted to circuits either - see, for instance, \cite{Vibet1988}.
But somehow this concern and these results have not found their way into the didactic literature
available today, which is laconic about it. 


Still, if $u(t)$ and $y(t)$ are smooth enough at $t=0$ these 
quantities are well defined and in this case the issue is eliminated. So we start with a formal statement of
the conditions under which this happens, whose proof is delayed to the next Section. 

\begin{theorem}\label{teoreminha}
If and only if $u(t)$ and its first $m-1$ derivatives are continuous at $t=0$ then  ${\cal Y}(t)$ is continuous at $t=0$.
\cqfd
\end{theorem}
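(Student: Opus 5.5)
The plan is to prove Theorem~\ref{teoreminha} by passing from the ODE (\ref{EDOLcc}) to a state space representation, as the Introduction anticipates. Concretely, I will use a realization $\dot{x}(t) = A x(t) + B u(t)$, $y(t) = C x(t) + D u(t)$ whose transfer function is $\sum_{j} b_j s^{n-j} / (s^n + \sum_i a_i s^{n-i})$, for instance the observable canonical form. Here $D = b_0$, which is zero unless $r=0$. The first step is to argue that the solutions of the ODE coincide with the outputs of this realization. More precisely, every $y$ solving (\ref{EDOLcc}) is $y = Cx + Du$ for some state trajectory $x$, and the initial state $x(0)$ is in one-to-one correspondence with ${\cal Y}(0)$ once the input terms are accounted for.

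The key property is that the state is always continuous at $t=0$. Since $x(t) = e^{At}x(0) + \int_0^t e^{A(t-\tau)}Bu(\tau)\,d\tau$, and $u$ is piecewise continuous without impulses (as in the step example), $x$ is continuous even when $u$ jumps. Next I differentiate the output equation repeatedly, substituting $\dot{x} = Ax + Bu$ each time. This gives
\[
\stackrel{(k)}{y}(t) = C A^k x(t) + \sum_{i=0}^{k-1} C A^{k-1-i} B \stackrel{(i)}{u}(t) + D \stackrel{(k)}{u}(t), \qquad k = 0,\dots,n-1 .
\]
The Markov parameters satisfy $CA^{j}B = 0$ for $j < r-1$ and $CA^{r-1}B = b_r \neq 0$, which follows by comparing the expansion of the transfer function in powers of $1/s$. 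Consequently $\stackrel{(k)}{y}$ involves only $u, \dots, \stackrel{(k-r)}{u}$. For $k \le n-1$ the highest input derivative that appears is therefore of order $m-1$, and it appears only in $\stackrel{(n-1)}{y}$, with the nonzero coefficient $b_r$. The case $r=0$ fits the same pattern, with the term $D\stackrel{(k)}{u}$ playing that role.

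\emph{Sufficiency.} If $u, \dots, \stackrel{(m-1)}{u}$ are continuous at $0$, then every term on the right-hand side is continuous there, because $x$ is continuous. Hence ${\cal Y}$ is continuous at $0$.

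\emph{Necessity.} I will argue inductively upward from $k=r$. The formula gives $\stackrel{(r)}{y} = CA^r x + b_r u$. Since $\stackrel{(r)}{y}$ and $x$ are continuous and $b_r \neq 0$, $u$ must be continuous. Then $\stackrel{(r+1)}{y} = CA^{r+1}x + (\text{continuous terms}) + b_r \dot{u}$ forces $\dot{u}$ to be continuous. Continuing in this way up to $k = n-1$ yields continuity of $\stackrel{(m-1)}{u}$.

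The main obstacle I expect is making the first step rigorous. The derivatives of $u$ and $y$ appearing in the formula above may fail to exist at $0$, so they have to be read as one-sided limits or distributionally. I also need to justify that the ODE solution and the SSR output agree across the discontinuity, which is exactly the ambiguity raised in the Introduction. I would handle this by defining the solution through the SSR on $t<0$ and $t>0$ with continuous state. I would then restrict the class of inputs, for example to functions with $\stackrel{(i)}{u}$ piecewise continuous and no impulsive content, so that $x$ is genuinely continuous.
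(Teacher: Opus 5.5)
Your proposal is correct and follows essentially the paper's own argument: express the output derivatives through an order-$n$ observable realization as ${\cal Y}(t) = O\vecx(t) + M{\cal U}(t)$, use continuity of the state, and note that because $h_0=\dots=h_{r-1}=0$ the derivative $\stackrel{(k)}{y}$ involves only $u,\dots,\stackrel{(k-r)}{u}$. Your upward induction using $h_r=b_r\neq 0$ (equivalently, invertibility of the triangular block of $M$ linking $\stackrel{(r)}{y},\dots,\stackrel{(n-1)}{y}$ to $u,\dots,\stackrel{(m-1)}{u}$) tightens the paper's looser claim that a given derivative of $y$ is discontinuous if and only if some input term in its expression is, since jumps in individual terms could in principle cancel.
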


We have thus established the conditions for validity of the traditional solution
(\ref{ingenua}): it is valid for any initial condition and for all inputs satisfying the conditions
of Theorem \ref{teoreminha}.
Under these conditions, all required signals are continuous at $t=0$, thus the solutions
prescribed in the didactic literature are well defined. However, these conditions
are very restrictive. For instance,  the most common input signal is perhaps the Heaviside step function $d(t)$, which is defined as 
$$
d(t) = 1~\forall t>0 \quad  d(t) = 0~\forall t<0 
$$
for which the solution (\ref{ingenua}) is well defined only if $m=0$. 

The question of what to do when these conditions are not satisfied
is not settled in the literature. In order to answer to it, we will first discuss the equivalence between
ODEs and SSRs.

\section{State space representation}

When the condition of Theorem \ref{teoreminha} is not satisfied, the RHS of equation (\ref{EDOLcc})
goes to infinity, as at least one of the terms is a ``singular function'' - a Dirac's $\delta$ function
or one of its derivatives. A formally-oriented mind may ask what does it mean a solution to an
equation whose RHS does not even exist within the realm of real numbers. To answer to this
question one can appeal to the treatment of singular functions 
or to the formalism of distributions, 
as done in some textbooks. 
In this paper we provide an alternative: we give the pragmatic
answer that a solution of the ODE is by definition the solution of
its equivalent dynamic equation, to be defined in the sequel. 
We start with an example. 

\begin{example}\label{example}
Consider the ODE given in the following equation 
\begin{equation}\label{example_not_obs}
\ddot{y}(t) + 6\dot{y}(t) + 5 y(t) = \dot{u}(t) + u(t) .
\end{equation}

To avoid for the moment discontinuity issues, consider its solution with the initial
condition ${\cal Y}(0) = [ 1 ~ 0 ]^T$ and a ramp input
$u(t) = t\quad \forall t \geq 0$, which satisfies the conditions of
Theorem \ref{teoreminha}. In this case the formula (\ref{ingenua}) can be applied
and results in 
\begin{equation}\label{solex2}
y(t)= \frac{1}{5} t - \frac{1}{25} (1-e^{-5t}) + \frac{1}{4} (e^{-t} - e^{-5t} )
\end{equation}

The input-output relationship defined by this ODE can be conveniently described
by the operator in Laplace domain, the transfer function
$$G(s)= \frac{Y(s)}{U(s)} = \frac{s+1}{s^2+6s+5} .$$
The problem of obtaining SSRs with a given transfer function 
is standard material in linear system theory, called the realization problem. 
The realization of $G(s)$ in observable canonical form is:
\begin{eqnarray}
\dot{\vecx} (t) & = & 
\left[\begin{array}{cc} 0 & - 5 \\ 1 & -6 \end{array}\right] \vecx (t) + \left[\begin{array}{c} 1  \\ 1 \end{array}\right] u(t) \label{SSex} \\
y (t) & = & \left[\begin{array}{cc} 0 & 1  \end{array}\right] \vecx (t) \label{SSeo}
.
\end{eqnarray}
But realization theory occupies itself only with the input-output relationship. The
question of whether or not the dynamic equation (\ref{SSex})(\ref{SSeo}) 
has the same response to initial conditions as the ODE (\ref{example_not_obs}) from which it 
originated is not answered.

The solution of the SSR (\ref{SSex})(\ref{SSeo})  to the same input and the same initial condition can be obtained by
the standard formula 
\begin{eqnarray}\label{SSsolution}
\vecx (t) & = &  e^{At} \vecx (0) + \int_0^t e^{A(t-\tau)}B u(\tau ) d\tau \\
 y(t) & = & C \vecx (t) + D u(t)
\end{eqnarray}
with the initial condition $\vecx (0)$ corresponding to the given ${\cal Y} (0)$, which 
can be calculated as follows. Taking the derivative of the output in (\ref{SSex})(\ref{SSeo})
\begin{equation}\label{qwe}
\dot{y} (t)= \left[\begin{array}{cc} 0 & 1  \end{array}\right] \dot{\vecx} (t) = 
\left[\begin{array}{cc} 1 & -5  \end{array}\right] \vecx (t) + u(t)
\end{equation}
and applying this equation at $t=0$ we have a system of two equations
with two unknowns that can be solved for $\vecx (0)$:
$$
\left[\begin{array}{c} 1 \\ 0 \end{array}\right] = \left[\begin{array}{cc}   1 & -5 \\ 0 & 1 \end{array}\right] \vecx (0)
$$
whose solution is $\vecx (0) = [ 1 ~ 0 ]^T$. Substitution of this value and the input into  
(\ref{SSsolution}) yields exactly the same expression as (\ref{solex2}). So, the answer is yes, 
the complete response of the dynamic equation equals the complete response 
of the ODE, and it is easy to verify that this happens for any input and any IC. 
The two descriptions are equivalent, as they
describe the exact same behaviors. \cqfd
\end{example}

\subsection{The equivalence}

Motivated by our example, and in the classical spirit of the theory in \cite{ZadehDesoer}, we define the following.
%

\begin{definition}\label{def_equivalence}
A state space representation is said to be equivalent to a differential
equation in the form (\ref{EDOLcc}) if for any input signal satisfying the conditions
of Theorem \ref{teoreminha} and any initial condition
${\cal Y}(0)$ there exists some initial condition $\vecx (0)$ such that 
the solution $y(t)$ of the ODE is the same as the solution of the SSR. \cqfd
\end{definition}

A dynamic equation, or state space representation (SSR), is  in the general form 
\begin{eqnarray}
\dot{\mathbf{x}} (t)& = &  A \mathbf{x} (t)+ Bu (t) \label{eq_estado} \\
y (t) & = & C \mathbf{x}(t) + Du(t) . \label{eq_saida} 
\end{eqnarray}
    
The equivalence between an ODE (\ref{EDOLcc}) and a SSR in the form
(\ref{eq_estado})-(\ref{eq_saida})
can be analyzed calculating the derivatives of the output $y(t)$ from (\ref{eq_saida}):
\begin{eqnarray*}
\dot{y} (t) & = & C \dot{\mathbf{x}}(t) + D\dot{u}(t)  = CA \mathbf{x}(t) +CBu(t) + D\dot{u}(t) \\
\ddot{y}(t) & = & CA \dot{\mathbf{x}}(t) +CB\dot{u}(t) + D\ddot{u}(t) =  CA^2 \mathbf{x}(t) + CAB u(t)+ CB\dot{u}(t) + D\ddot{u}(t) 
\nonumber \\
\vdots & & \vdots \nonumber \\
\stackrel{(n-1)}{y}(t) & = & CA^{n-1} \mathbf{x}(t) + CA^{n-2} B u(t) + CA^{n-3} B \dot{u}(t) + \ldots + D \stackrel{(n-1)}{u}(t) 
\end{eqnarray*}
where one recognizes the Markov parameters $h_i$ \cite{Chen}:
\begin{equation}\label{Markovpar}
h_0 = D \quad\quad h_i = CA^{i-1}B~~ i= 1, 2, \ldots
\end{equation}

This set of equations can be written as one single matrix equation:
\begin{equation}\label{markov}
\left[\begin{array}{c}
\stackrel{(n-1)}{y}(t)  \\ \stackrel{(n-2)}{y}(t) \\ \vdots  \\ \dot{y}(t) \\ y(t)  \\ 
\end{array}\right]
= 
\left[\begin{array}{c}
CA^{n-1} \\ CA^{n-2} \\ \vdots \\ CA  \\ C \\ 
\end{array}\right]
\vecx
+
\left[\begin{array}{ccccc}
h_0 & h_1 & h_2 & \ldots & h_{n-1}  \\
 0 & h_0 & h_1 & \ldots & h_{n-2}  \\ 
 0 & 0 & h_0  & \ldots & h_{n-3}  \\
\vdots &&& \vdots \\
0 & 0 & 0 & \ldots & h_0
\end{array}\right]
\left[\begin{array}{c}
\stackrel{(n-1)}{u}(t)   \\ \stackrel{(n-2)}{u}(t)   \\ \vdots \\ \dot{u}(t) \\ u(t)  \\ 
\end{array}\right]
\end{equation}
or, in more compact form,
\beq\label{Markov}
{\cal Y}(t) = O \vecx (t) + M {\cal U}(t)
\eeq
where we have defined the observability matrix  $O$ and the Markov parameter matrix $M$.

To be equivalent, both the ODE and the SSR must have the same response to any input and to 
any initial condition.
The problem of equivalence from an input-output point if view only is fully 
solved by the realization theory. The input response is determined by the transfer
function $G(s)$,  which can be calculated from the ODE as
$$
G(s) =  \frac{\sum_{j=0}^{n} b_j s^{n-j}}{s^n +\sum_{i=1}^{n} a_i s^{n-i} } 
$$
and from the SSR as
$$
G(s) = C(sI-A)^{-1} B + D .
$$ 
Clearly, the input responses of the ODE and of the SSR are the same for any input if the transfer functions are the same. 
As for the response to initial conditions, it will suffice to find an initial condition for
the SSR - a vector $\mathbf{x}(0)$, that is - which corresponds to the given initial
condition for the ODE - the vector ${\cal Y}(0)$. This is easily 
accomplished by writing  equation (\ref{Markov}) for $t=0$ then solving it for  $\mathbf{x}(0)$. 
It is only important to realize that 
finding such a solution relies on the observability matrix $O$ being invertible, in other words, the
dynamic equation being observable. 

In conclusion, the differential equation (\ref{EDOLcc}) and the state space representation (\ref{eq_estado})-(\ref{eq_saida})
are equivalent if the following three conditions are satisfied:
\begin{enumerate}
\item they have the same order $n$
\item $C (sI-A)^{-1}B +D =  \frac{\sum_{j=0}^{n} b_j s^{n-j}}{s^n +\sum_{i=1}^{n} a_i s^{n-i} }$ 
\item the state space representation is observable.
\end{enumerate}

The theory of realizations - see \cite{Chen}, for example - occupies itself only with transfer functions, that is,
it treats the problem of finding an SSR that has a given input-output relationship. 
The equivalence concept expressed in Definition \ref{def_equivalence} is different: 
we are looking for an SSR that possesses not only the same input response
as the ODE, but also the same response to initial conditions. 

Equation (\ref{markov}) also allows to prove Theorem \ref{teoreminha},
because it provides an algebraic relation between the derivatives of $y(t)$ and those of $u(t)$.
Since the state is continuous, a particular derivative of $y(t)$ is discontinuous if and only if
at least one of the terms on $u(t)$ appearing in its expression is discontinuous. It is easy to verify in (\ref{markov}) that the
expression of  $\stackrel{(j)}{y}(t)$ for any given $j$ depends only on ${\cal U}_{j-r}(t)$ and not on
higher derivatives of $u(t)$, which immediately implies the statement of Theorem \ref{teoreminha}.
%
%
%

%
%



An additional concern is that when the conditions of Theorem \ref{teoreminha}
are not satisfied the RHS of the ODE does not even exist, so one might ask what is - conceptually 
speaking - a solution to this equation. 
Dynamic equations have a key property that allows us to escape from this tight spot:
their solutions $\vecx (t)$ are continuous
for any {\em finite} $u(t)$. A dynamic equation ``does not care'' whether or
not the input $u(t)$ satisfies the conditions of Theorem \ref{teoreminha}, 
as long as it is finite-valued. So, for such input signals we can
{\em define} the solution of the ODE as the solution of the equivalent
SSR. In so doing we avoid resorting to infinite-valued
``functions'' and other psychedelic mathematical objects.

State space representations have been around for quite a while, in the systems theory in general and
in circuit theory as well.  
It was in 1957 \cite{AMatrix} that it was first proposed to use this representation in 
linear ``network analysis'' (the name used for ``circuit theory" in those days),
which was called ``the A-matrix description''. A textbook on state space representations written 
by hardcore electrical engineers was available
as early as 1963 \cite{ZadehDesoer}.
Yet, not many textbooks on circuit theory 
make ample use of SSR's, even though they scale much better than ODE's for large
circuits, provide convenient proofs for many properties of linear circuits and systems in general, and are already
covered by other disciplines in an Electrical Engineering program. Textbooks on signals and systems
tend to include state space representations, but with little to no connection to the study of differential equations.

\section{The discontinuous case}

In the classical theory of differential equations, as presented in the textbooks on calculus, it is usually
assumed that the RHS of (\ref{EDOLcc}) is continuous, or at least finite. This, as seen in Theorem \ref{teoreminha},
easily solves the question of initial conditions, but this assumption is not usually made explicit in the literature.
Perhaps more importantly, this solution is as simple as it
is unrealistic, for in an enormous amount of  applications it is not satisfied, implying that 
${\cal U}(t)$ and/or ${\cal Y}(t)$ are discontinuous at $t=0$. 
To deal with these cases, it is convenient to define the following notation and nomenclature.
 
\begin{definition}
Let  $x(\cdot ):~ \mathbb{R} \rightarrow \mathbb{R}$. We define 
$$
x(0^-) \bydef \lim_{t \rightarrow 0^-} x(t)
$$
and 
$$
x(0^+)  \bydef \lim_{t \rightarrow 0^+} x(t) .
$$
Since these are, in general,  two different quantities, we also give them different names:
$x(0^+)$ will be named the  {\em first condition} of $x(t)$, whereas 
$x(0^-)$ will be named its  {\em previous condition}. 
When $x(t)$ is continuous at $t=0$, we will use the
simpler notation $x(0)$ and will keep the nomenclature
{\em initial} condition.
\cqfd
\end{definition}

This notation is commonplace in the literature. But here we have also chosen to define the new
nomenclature {\em previous} and {\em first} conditions, as opposed to {\em initial} conditions, to make it clearer
that these are different things and they can not always be interchanged for each other. 


\subsection{Reformulating the IVP}

Since all problems come from discontinuities at $t=0$, the most natural solution to these
problems is to forget about this time instant and focus only on $t>0$. We do just that and
redefine the IVP as follows.

\begin{problem}\label{problema_certo}

Given:
\begin{enumerate}
\item an input signal $u(\cdot ): \mathbb{R}^+ \rightarrow \mathbb{R}$ that is finite 
for all $t \in (0, t_f )$,
\item a vector of first conditions ${\cal Y}(0^+)$,
\end{enumerate}
find $y(t)~\forall t\in (0, t_f )$ satisfying the ODE (\ref{EDOLcc}) for this input and this first condition. 
\cqfd
\end{problem}

The problem thus posed is formally perfect, in the sense that there are neither potentially 
undesired quantities in its statement - infinite valued functions - nor undefined quantities in its
 solution - the values at $t=0$. And it can always be solved in the prescribed way by just replacing
the values of ${\cal U}(0)$ and ${\cal Y}(0)$ in equations (\ref{ingenua}) by their values at $t=0^+$. 

This formulation has the merit that it keeps all technical difficulties previously mentioned away. Alas, it also keeps most
practical applications away, for it makes much more practical sense that the value of ${\cal Y}(t)$
is known {\bf before} the application of the input - that is, 
what is given is ${\cal Y}(0^-)$ and not ${\cal Y}(0^+)$. 
To be able to apply this solution we need somehow to determine 
the first condition ${\cal Y}(0^+)$ from the information provided, that is, we
need a map from ${\cal Y}(0^-)$ to ${\cal Y}(0^+)$.

This mapping from $t=0^-$ to $t=0^+$ - that is, from {\em previous} conditions to {\em first} conditions - 
has been amply considered before \cite{Dervisoglu, inconsistentIC} and different solutions/mappings
exist to accommodate different assumptions on the system being modeled. 
In our present setting, we can determine the first conditions from the previous conditions directly from (\ref{Markov}).
Since this equation is valid for any time instant, we can write it for both limits around $t=0$:
\begin{eqnarray}
{\cal Y}(0^-) &=& O \vecx (0^-) + M {\cal U}(0^-) \\
{\cal Y}(0^+) &=& O \vecx (0^+) + M {\cal U}(0^+) 
\end{eqnarray}
But the state is guaranteed to be a continuous function of time, for its derivative is, by definition, always finite. Thus we can write
\beq\label{MarkovZero+}
{\cal Y}(0^+) - M {\cal U}(0^+) = {\cal Y}(0^-) - M {\cal U}(0^-)
\eeq
or, equivalently, 
\beq\label{MarkovZero++}
{\cal Y}(0^+) = {\cal Y}(0^-)  +  M [{\cal U}(0^+) -  {\cal U}(0^-) ] .
\eeq

Equation (\ref{MarkovZero++}) is a simple formula that allows to determine the first conditions ${\cal Y}(0^+)$ from the 
previous conditions ${\cal Y}(0^-)$. But a fundamental observation must be made here:
the previous value of the input ${\cal U}(0^-)$ must also be provided, along with 
the input values for positive times; without this information it is not possible to solve
uniquely the ODE. 
A second observation is also worth making: the first condition $ {\cal Y}(0^+)$ is composed of two parts, one due
to the previous condition of the output - the term ${\cal Y}(0^-)$ - and another one due
to the input or, more specifically, due to the instantaneous change of the input ${\cal U}(0^+) -  {\cal U}(0^-)$.

Once the first condition has been obtained, one can just use it {\em in lieu} of the initial condition in the formula (\ref{ingenua})
or whatever the solution procedure found in the textbooks that one wants to apply. This is illustrated by an example.

\begin{example}\label{segundo_exemplo}
Consider the ODE:
\begin{equation}\label{example_ordem2}
\ddot{y}(t) + 6\dot{y}(t) + 5 y(t) =  \ddot{u}(t)+ 3\dot{u}(t) + 2u(t) ,
\end{equation}
for which we want to determine the output $y(t)~\forall t>0$ under the following conditions.
At $t=0$ the input switches from a sinusoid $u(t)=\cos(t) $ to a ramp $u(t)=t$. 
Right before the input switches, the output was zero and its derivative was equal to one. 
We thus have the following data: 
\begin{itemize}
\item the previous condition of the output ${\cal Y}(0^-) =[ 1~0]^T$;
\item the previous condition of the input ${\cal U}(0^-) =[ -sin (0) ~cos(0)]^T = [ 0 ~~1]^T$;
\item the input $u(t) = t ~\forall t>0$. 
\end{itemize}

Writing equation (\ref{ingenua}) with the initial values replaced by the first values we have
\begin{equation}\label{plm}
Y(s) = \frac{s^2+ 3s + 2}{s^2+6s+5} \frac{1}{s^2} + \frac{ 1}{s^2+6s+5} \{ (s+6) y(0^+) + \dot{y}(0^+) - (s+3) u(0^+) -  \dot{u}(0^+) .
\end{equation}
The Markov parameters are easily identified as $h_0 = 1$ and $h_1 = -3$, so that, after noting that  ${\cal U}(0^+) =[ 1 ~0]^T$,
equation  (\ref{MarkovZero++}) becomes
$$
{\cal Y}(0^+) = \left[\begin{array}{c} 1 \\ 0 \end{array}\right] +
     \left[\begin{array}{cc} 1 & -3 \\ 0 & 1 \end{array}\right] \{ \left[\begin{array}{c} 1 \\ 0 \end{array}\right] - \left[\begin{array}{c} 0 \\ 1 \end{array}\right] \}
     =  \left[\begin{array}{c} 5 \\ -1 \end{array}\right] .
$$
Using this value for the first condition in (\ref{plm}) yields
\begin{equation}\label{plmm}
Y(s) = \frac{s^2+ 3s + 2}{s^2+6s+5} \frac{1}{s^2} + \frac{ -s-2 }{s^2+6s+5}  .
\end{equation}
\cqfd
\end{example}

On the other hand, it is interesting that it is not even necessary to perform explicitly this calculation of first conditions,
as equation (\ref{MarkovZero++}) also allows us to prove the following result.

\begin{theorem}\label{teo:principal}
Consider an ODE in the form (\ref{EDOLcc}); then equation (\ref{ingenua}) yields the same result whether the initial 
conditions (values at $t=0$) are replaced by the first conditions (at $t=0^+$) or by the previous conditions (at $t=0^-$).

{\bf Proof}

The statement clearly corresponds to the verification of the following equality
\begin{eqnarray}
&&  \sum_{j=0}^{n-1} b_j \sum_{k=1}^{n-j} s^{n-k-j} \stackrel{(k-1)}{u} (0^+) + \sum_{i=0}^{n-1} a_i \sum_{k=1}^{n-i} s^{n-k-i} \stackrel{(k-1)}{y} (0^+) \bydef X(0^+) = \nonumber \\
&&  \sum_{j=0}^{n-1} b_j \sum_{k=1}^{n-j} s^{n-k-j} \stackrel{(k-1)}{u} (0^-) + \sum_{i=0}^{n-1} a_i \sum_{k=1}^{n-i} s^{n-k-i} \stackrel{(k-1)}{y} (0^-) \bydef X(0^-).
\label{igualdade}
\end{eqnarray}

One can change the order of summation  in (\ref{igualdade}) to write 
\begin{equation} \label{ingenua1}
X(0^+)  = \sum_{k=1}^{n} \stackrel{(k-1)}{u} (0^+) \sum_{j=0}^{n-k} b_j  s^{n-k-j}  + 
\sum_{k=1}^{n} \stackrel{(k-1)}{y} (0^+) \sum_{i=0}^{n-k} a_i  s^{n-k-i}  
\end{equation}
and similarly for $X(0^-)$.


Equation (\ref{ingenua1}) can also be written in vector form as
\begin{equation}\label{ingenua_vetor}
X(0^+) =
\mathbf{v}^T_y {\cal Y}(0^+) -  \mathbf{v}^T_u {\cal U}(0^+)
\end{equation}
where 
\begin{equation}\label{ves}
\mathbf{v}_y = \left[\begin{array}{c}
			1 \\
			 s + a_1 \\
			\vdots \\
			\sum_{i=0}^{n-k} a_i  s^{n-k-i} \\
			\vdots \\
			\sum_{i=0}^{n-1} a_i  s^{n-1-i} 
			\end{array}\right]
			\quad
\mathbf{v}_u = \left[\begin{array}{c}
			b_0 \\
			b_0 s + b_1 \\
			\vdots \\
			\sum_{j=0}^{n-k} b_j  s^{n-k-j} \\
			\vdots \\
			\sum_{j=0}^{n-1} b_j  s^{n-1-j} 
			\end{array}\right] .
\end{equation}

Hence, equation (\ref{igualdade}) can be written in vector form as
\begin{equation}\label{igualdadevetor}
\mathbf{v}^T_y {\cal Y}(0^+) -  \mathbf{v}^T_u {\cal U}(0^+) = \mathbf{v}^T_y {\cal Y}(0^-) -  \mathbf{v}^T_u {\cal U}(0^-)
\end{equation}

On the other hand, multiplying equation (\ref{MarkovZero+})  to the left by $\mathbf{v}_y$ yields:
\beq
\mathbf{v}_y^T{\cal Y}(0^+) - \mathbf{v}^T_yM {\cal U}(0^+) = \mathbf{v}^T_y{\cal Y}(0^-) - \mathbf{v}^T_yM {\cal U}(0^-)
\eeq
which will be exactly the same as (\ref{igualdadevetor}) (and thus (\ref{igualdade}), which would prove the result) if
$\mathbf{v}^T_yM=\mathbf{v}^T_u$.
To show that this is the case, we recall a property of the Markov parameters:
\begin{equation}
G(s) = \frac{\sum_{j=0}^{n} b_j s^{n-j}}{s^n +\sum_{i=1}^{n} a_i s^{n-i} } = \sum_{i=0}^{\infty} h_i s^{-1} .
\end{equation}
This expression, after rewritten as $\sum_{j=0}^{n} b_j s^{n-j} = (s^n +\sum_{i=1}^{n} a_i s^{n-i})  \sum_{i=0}^{\infty} h_i s^{-1}$,
 allows to express the coefficients $b_j$ as functions of the Markov parameters $h_i$:

\begin{eqnarray}
b_0 & = & h_0 \\
b_1 & = & h_1 + a_1 h_0 \\
b_2 & = & h_2 + a_1 h_1 + a_2 h_0 \\
\vdots && \\
b_j & = & \sum_{i=0}^{j-1} a_{i} h_{j-i} . \label{mnb} 
\end{eqnarray}

But
\begin{eqnarray*}
\mathbf{v}^T_yM & = & \left[\begin{array}{c}
h_0 \\
h_1 + (s + a_1) h_0 \\
h_2 + ( s + a_1) h_1 + (s^2 + a_1 s + a_2 ) h_0 \\
\vdots \\
\end{array}\right]^T  \\ 
& = &  
\left[\begin{array}{c}
h_0 \\
h_0 s + (h_1  + a_1 h_0) \\
h_0 s^2 + ( h_1 + a_1 h_0) s + (h_2 + a_1 h_1 + a_2 h_0) \\
\vdots \\
\end{array}\right]^T
\end{eqnarray*}
which, after using (\ref{mnb}), results in  $\mathbf{v}^T_yM=\mathbf{v}^T_u$.
\cqfd
\end{theorem}

\begin{example}
Consider again Example \ref{segundo_exemplo}, whose solution (\ref{plmm}) was obtained from equation (\ref{plm}).
Now rewrite (\ref{plm}) with the first conditions replaced by the previous conditions, that is:
\begin{equation}\label{plmn}
Y(s) = \frac{s^2+ 3s + 2}{s^2+6s+5} \frac{1}{s^2} + \frac{ 1}{s^2+6s+5} \{ (s+6) y(0^-) + \dot{y}(0^-) - (s+3) u(0^-) -  \dot{u}(0^-) \}.
\end{equation}

Using the previous conditions ${\cal Y}(0^-) =[ 1~0]^T$ and ${\cal U}(0^-) = [ 0 ~~1]^T$ in (\ref{plmn}), as
prescribed in Theorem \ref{teo:principal}, results in the
same expression (\ref{plmm}).\cqfd
\end{example}

So, the answer to the initial question - what value to use in lieu of $t=0$ when this is
not defined - is that we can use either the previous values or the first values of
${\cal U}(t)$ and  ${\cal Y}(t)$, as they both yield the same result. It is just necessary
to be consistent, that is, one can either use ${\cal U}(0^-)$ and  ${\cal Y}(0^-)$ or
${\cal U}(0^+)$ and  ${\cal Y}(0^+)$ - one can not mix first values with previous values.
If the problem data consist of  the previous values  ${\cal Y}(0^-)$ instead of the first value
${\cal Y}(0^+)$, as is more sensible to expect, then one needs to know the previous
value of the input as well - knowing
only $u(t) ~ \forall t>0$ is not enough, one also needs to know ${\cal U}(0^-)$.
On the other hand, if the problem data include the first values ${\cal Y}(0^+)$, 
only $u(t) ~ \forall t>0$ must be known - the effect of the input change at $t=0$
is already taken into account in the value of  ${\cal Y}(0^+)$.
Alternatively, one can take the previous values of ${\cal Y}(0^-)$ and ${\cal U}(0^-)$, 
determine from them the first values ${\cal Y}(0^+)$ and ${\cal U}(0^+)$, and then use them
in the formula or in any other solution method one may prefer. 
This is a more general procedure, in the sense that it applies also to other solution methods.
Moreover, it also applies  when solving a more general class of problems, those of singular systems - but this
is another story, to be told some other time.

\section{Conclusions}

The treatment given in the literature to the solution of linear ODE's in response to nonzero initial conditions
could use some more precision. Specifically, the smoothness of the signals involved - input and
output - and its effects are not properly analyzed in any textbook we could find.
Many of the theorems and formulas presented are valid only under some
smoothness assumptions that are rarely stated in the literature and rarely
satisfied in real life. 
The (lack of) smoothness also poses a more basic, philosophical, question: 
is there any sense in an equation involving real functions in which some terms
do not exist in this realm?
The classical answers to this question rely on sophisticated mathematical
objects - distributions, singular ``functions'' - which have to be presented to the student
for this purpose only.
These difficulties are absent is state space representations, whose solutions are continuous
as long as the input is finite. The literature abounds with analyses of
the equivalence between the input-output behavior of ODE's and SSR's, but it is hard to
find anything about equivalence regarding their responses to initial conditions. 

We have provided in this paper some formal definitions and some results that we believe are useful to clarify
these issues. First, we have made explicit the assumptions behind the solutions most usually found in the 
literature, which consist of severe constraints on the smoothness of the input. We have analyzed the 
equivalence between state space representations and differential equations. Then, based on this
equivalence, we have  defined the solution of a ``psychedelic'' ODE - one whose RHS goes to infinity - 
as the solution of its equivalent SSR, which removes the need to enter the realm of singular ``functions''.
Based on these concepts we were able to provide a simple answer to the question
that has motivated our work in the first place: what to do to solve an ODE when the said smoothness
conditions are not satisfied. Actually one can use either the first or the previous conditions
of the input and output  {\em in lieu} of the initial conditions, though some care must exercised in doing it, as
explained in the text. 

%

%
%


\bibliographystyle{plain}
\bibliography{CI}

\end{document}